\crefname{equation}{}{} 
\crefname{assumption}{Assumption}{}
\crefname{table}{Table}{} 
\crefname{figure}{Fig.}{}
\crefname{section}{Section}{}
\crefname{remark}{Remark}{}
\newlength\myindent
\def\trieq{\triangleq}
\newtheorem{theorem}{Theorem}
\newtheorem{lemma}{Lemma}
\theoremstyle{definition}  
\theoremstyle{definition} \newtheorem{assumption}{Assumption}
\theoremstyle{remark}  \newtheorem{remark}{Remark}
\title{\LARGE \bf
Coordinated Path Following of UAVs using Event-Triggered Communication over Time-Varying Networks with Digraph Topologies
}
\author{Hyungsoo Kang, Isaac Kaminer, Venanzio Cichella, and Naira Hovakimyan
\thanks{This work is supported by AFOSR, NASA, ONR and NPS CRUSER.}
\thanks{Hyungsoo Kang and Naira Hovakimyan are with the Department of Mechanical Science and Engineering, University of Illinois at Urbana-Champaign, 
        Urbana, IL 61801, USA.  
        {\tt\small \{hk15, nhovakim\} @illinois.edu}}
\thanks{Isaac Kaminer is with the Department of
Mechanical and Aerospace Engineering, Naval Postgraduate School, Monterey, CA 93943, USA.
        {\tt\small kaminer@nps.edu}}
\thanks{Venanzio Cichella is with the Department of Mechanical
Engineering, University of Iowa, 
        Iowa City, IA 52242, USA.
        {\tt\small venanzio-cichella@uiowa.edu}} 
}
\begin{document}

\maketitle
\thispagestyle{empty}
\pagestyle{empty}

\begin{abstract}
In this article, a novel time-coordination algorithm based on event-triggered communications is proposed to achieve coordinated path-following of UAVs. To be specific, in the approach adopted a UAV transmits its progression information over a time-varying network to its neighbors only when a decentralized trigger condition is satisfied, thereby significantly reducing the volume of inter-vehicle communications required when compared with the existing algorithms based on continuous communications. Using such intermittent communications, it is shown that a decentralized coordination controller guarantees exponential convergence of the coordination error to a neighborhood of zero. Also, a lower bound on the interval between two consecutive event-triggered times is provided showing that the chattering issue does not arise with the proposed algorithm. Finally, simulation results validate the efficacy of the proposed algorithm.
\end{abstract}

\section{INTRODUCTION}
Recent remarkable progress in theory and technologies of multi-agent systems has given rise to a widespread use of multi-UAV systems. Relevant applications include collaborative payload transportation \cite{Lee2013,Lee2017}, formation flying for space exploration \cite{Morgan20141725,Bandyopadhyay2015} and cooperative SLAM \cite{Schmuck2017,Dubé2017} to name a few.

Among the diverse algorithms, coordinated path-following control has played a key role in solving challenging problems such as 1) search and rescue missions, where multiple UAVs progress in coordination covering a large area at a time; 2) sequential safe auto-landing, where multiple UAVs arrive at the glide slope safely separated by a predefined time interval; 3) atmospheric-science missions, where multiple UAVs fly cooperatively collecting data from a region of interest. 
The key requirement that the coordinated path following framework was developed to address was to guarantee a simultaneous arrival of each UAV at the end of its desired trajectory. The framework includes three steps: 1)  Generate  a set of collision-free desired trajectories to be assigned to each UAV that minimize a given cost and satisfy the simultaneous arrival requirement, boundary conditions, UAV dynamic constraints and guarantee obstacle avoidance \cite{Choe20161744}, \cite{Cichella2021}. 
2) Design  a  path-following control law \cite{CICHELLA201313} that steers a UAV along its desired trajectory.
3) Develop a decentralized time-coordination algorithm that enables each UAV to transmit progression information along its desired trajectory to its neighbors and also to adjust its progression speed based on the information provided by the neighbors. This step guarantees simultaneous time arrival by all the UAVs in the presence of disturbances.  

It has been shown in the early-stage research on coordinated path following that in Step 3 above the exponential stability of the time-coordination algorithms employed by all the UAVs  can be reduced to a consensus problem. Moreover, in, for example, \cite{Kaminer2006}, \cite{Ghabcheloo2007133} it was assumed that the topology of the underlying communication network  is represented by a connected  bidirectional graph with a fixed topology. However, this is a strong assumption less likely to be satisfied by a typical communication network with
time-varying network topology. To address this issue,  researchers in \cite{Xargay2013499}, \cite{Cichella2015945} were able to guarantee convergence of the time-coordination algorithms for the case where the network topology is represented by a time-varying bidirectional graph that is connected in an integral sense, i.e. the integral of the graph from $t$ to $t+T$ is connected $\forall t\geq0$ with a constant $T>0$. This condition, connectedness in an integral sense, was used in \cite{Mehdi2017}, \cite{Tabasso2020436} to show that collision avoidance can be achieved as well as the time-coordination. In the work reported in \cite{Tabasso2022704} the researchers applied the time-coordination algorithm \cite{Cichella2015945} to the problem of continuous monitoring of a path-constrained moving target. Our recent work \cite{Hyungsoo2023} showed convergence of time-coordination algorithms for a more general connectivity condition: the communication network is no longer required to be bidirectional. In fact, we have shown that the connectedness of the directed graph in an integral sense is sufficient to achieve  convergence. We note, however, that all the  algorithms discussed rely on continuous or piecewise continuous communications which may be undesirable or unavailable in real-world applications. Typical examples include underwater robotic missions where communication bandwidth is severely limited and military applications where stealth is of paramount importance. 

This issue can be overcome by resorting to event-triggered communication and control algorithms. With this approach, the communication and control input updates occur only when a predefined condition is satisfied, thereby significantly reducing the inter-vehicle communications in the process of achieving consensus on the variables of interest. It was pioneered by \cite{Dimarogonas2012,SEYBOTH2013245}. In most of the current work, the network topology is assumed to be a static bidirectional graph \cite{Dimarogonas2012,SEYBOTH2013245,Fan2015,YANG2019129} or digraph \cite{Hu2018,Qian2019}. Recently, some algorithms have been proposed where the network was assumed to be a time-varying bidirectional graph \cite{Wu2018,Cheng2019,Hu2019} or digraph \cite{Jia2018,HAN2015196,Hao2023}. Importantly, the existing consensus algorithms based on event-triggered communications have been developed in the absence of requirements on the change rate of the state, so they cannot be directly applied to the time-coordination problem: achieving state consensus $\gamma_i(t)=\gamma_j(t)$ in the presence of the requirement that $\dot{\gamma}_i(t)$ tracks a given desired rate $\dot{\gamma}_d(t)$. This gap motivated our research.

The contributions of this paper are summarized next. We propose a novel event-triggered communication algorithm that achieves the time-coordination objectives.
The topology of the underlying network is assumed to be a time-varying digraph connected in an integral sense. Each UAV transmits its progression information only when a decentralized trigger condition is satisfied, thereby significantly reducing the inter-vehicle communication requirements, particularly  when compared with the existing time-coordination algorithms based on continuous communications, see for example \cite{Kaminer2006,Ghabcheloo2007133,Xargay2013499,Cichella2015945}. We employ Lyapunov analysis to show that a proposed decentralized coordination controller guarantees exponential convergence of the time-coordination algorithms. Also, it is proven that the time interval between two consecutive events is bounded below. This implies that chattering is not an issue for the proposed algorithm. Since the algorithm is designed without making assumptions on the vehicle dynamics, it is applicable to any vehicle endowed with a path-following controller. In this paper, we illustrate the use of the algorithm  on quadrotors, a very widely used UAV. 

The rest of this article is organized as follows. Section~\ref{prelim} provides a brief review of graph theory. Section~\ref{III} formulates the time-coordination problem and presents assumptions on the inter-vehicle information flow. In Section~\ref{IV}, the time-coordination control law based on event-triggered communication is described. The performance of the time-coordination algorithm is analyzed in the main theorem. In Section~\ref{V}, simulation results validate the efficacy of the proposed algorithm. Finally, Section~\ref{VI} presents some conclusions.

\section{PRELIMINARIES} \label{prelim}
\subsection{Graph Theory}

\noindent


A digraph of size $n$ is defined by $\mathcal{D}=(\mathcal{V},\mathcal{E},\mathcal{A})$, where $\mathcal{V}=\{1,\dots,n\}$ is the set of nodes, $\mathcal{E}$ is the set of edges, and $\mathcal{A}$ is the Adjacency matrix. An edge is denoted by an ordered pair $(i,j)$, which means information can be transmitted from node $j$ to node $i$. The Adjacency matrix $\mathcal{A}$ is constructed as follows: if $(i,j)\in \mathcal{E}$, one has $\mathcal{A}_{ij}=1$. Otherwise, $\mathcal{A}_{ij}=0$. The digraph $\mathcal{D}$ is represented by the Laplacian $L\trieq \Delta-\mathcal{A}$, where $\Delta$ is a diagonal matrix with $\Delta_{ii}\trieq\sum_{j=1,j\neq i}^{n}\mathcal{A}_{ij}$. The neighborhood of node $i$ is the set $\mathcal{N}_i\trieq\{j\in\mathcal{V}: (i,j)\in \mathcal{E}\}$. A directed path from node $i_s$ to node $i_0$ is a sequence of edges $(i_0,i_1)$, $(i_1,i_2)$, $\dots$, $(i_{s-1},i_s)$. The digraph $\mathcal{D}$ contains a directed spanning tree if there exists a node such that it can reach every other node via a directed path.

Given a time-varying digraph $\mathcal{D}(t)$, we consider the digraph represented by the integrated Laplacian $\int_{t}^{t+T}L(\tau)d\tau$. An edge $(i,j)$ in it is said to be a $\delta$-edge if $\int_{t}^{t+T}-L_{ij}(\tau)d\tau\geq\delta$. 
A path in it is said to be a $\delta$-path if every edge on the path is a $\delta$-edge.


\section{TIME-COORDINATED PATH-FOLLOWING FRAMEWORK} \label{III}
\subsection{Path Following of a Singe UAV}
A trajectory generation algorithm such as \cite{Choe20161744}, \cite{Cichella2021} produces a set of collision-free desired trajectories for $n$ UAVs
\begin{align} \label{trajectory}
    p_{d,i}(t_d): [0,t_f]\rightarrow \mathbb{R}^3, \ \ i\in\{1,\dots,n\},
\end{align}
where $t_f$ is the simultaneous time of arrival. With the introduction of an adjustable nondecreasing function $\gamma_i(t)$ called coordination state or virtual time
\begin{align*}
    \gamma_i(t): [0,\infty)\rightarrow[0,t_f], \ \ \ i\in\{1,\dots,n\}, 
\end{align*}
we can denote the desired position of the $i$th quadrotor as $p_{d,i}(\gamma_i(t))$. This simple idea introduces an additional degree of freedom that allows one to adjust the progression speed of the UAV along its trajectory by controlling $\gamma_i(t)$ and thus to guarantee simultaneous arrival at $t=t_f$ in the presence of disturbances. The control law for $\gamma_i(t)$ will be presented in Section~\ref{IV}. 

To have the UAV track $p_{d,i}(\gamma_i(t))$, a path-following control law is needed. This can be done by driving the path following error 
\begin{align*}
    e_{PF,i}(t)\trieq p_i(t)-p_{d,i}(\gamma_i(t)), \ \ \ i\in\{1,\dots,n\}, 
\end{align*}
to zero. In the above expression, $p_i(t)$ represents the actual position of the UAV. In \cite{CICHELLA201313}, the authors formulated a path-following control law which ensures that the error converges exponentially to zero with ideal performance of the inner-loop autopilot and to a neighborhood of zero with non-ideal one. In other words, in the latter case, there exists $\rho>0$ such that 
\begin{align} \label{pf_error}
    \|e_{PF}(t)\|\leq\rho, \ \ \ \forall t\geq0,
\end{align}
where $e_{PF}(t)=[e_{PF,1}(t)^\top,\dots,e_{PF,n}(t)^\top]^\top$.

\subsection{Time Coordination of Multiple UAVs}
As discussed in the previous subsection, the progression of the UAV along its trajectory can be controlled by adjusting $\gamma_i(t)$. 
To be more specific, we impose the following  objectives. 

The UAVs are said to be synchronized at time $t$, if 
\begin{align} \label{obj1}
    \gamma_i(t)=\gamma_j(t), \ \ \ \forall i,j\in\{1,\dots,n\}.
\end{align}
Furthermore, for a desired mission progression pace $\dot{\gamma}_d(t)>0$, if 
\begin{align} \label{obj2}
    \dot{\gamma}_i(t)=\dot{\gamma}_d(t), \ \ \ \forall i\in\{1,\dots,n\},
\end{align}
the UAVs are considered as progressing in accordance to the desired progression mission pace. 

To satisfy the above requirements, UAVs need to exchange the coordination states $\gamma_i(t)$ with their neighbors over a time-varying directed network. 
This interaction among the UAVs can be modeled by using graph theory, see Section \ref{prelim}. 

The assumptions on the nature of the inter-vehicle communications used in this paper are provided next.

\begin{assumption} \label{assum1}
The information flow between any two UAVs is directional without time delays.
\end{assumption}

\begin{assumption} \label{assum2}
The $i$th UAV can receive coordination state $\gamma_j(t)$ only 
from other UAVs in its neighborhood set $\mathcal{N}_i(t)$, where $j\in\mathcal{N}_i(t)$.
\end{assumption}

The topology of the underlying communication network $\mathcal{D}(t)$ 
described by the Laplacian $L(t)$ varies in a way that satisfies the following assumption.

\begin{assumption} \label{assum3}
For all $t\geq0$, there exists $T>0$ such that the digraph represented by $\int_{t}^{t+T}L(\tau)d\tau$ contains a $\delta$-spanning tree. That is, a root node in it can reach every other node via a $\delta$-path. 
\end{assumption}

\begin{remark}
    The network Quality of Service (QoS) is determined by the parameters $T>0$ and $\delta \in (0,T]$. With a smaller value of $T$ and a value of $\delta$ closer to $T$, the~information flow has stronger connectivity. 
\end{remark}

\begin{remark}
    Notice that  Assumption~\ref{assum3} requires only connectedness in an integral sense, not pointwise in time. Thus, even if the network is disconnected for a certain period of time or at all times, this assumption can still be satisfied. 
\end{remark}

\noindent \textit{Problem (Time-Coordinated Path-Following Problem):} Given a set of desired trajectories \eqref{trajectory} and a path-following control law that ensures \eqref{pf_error}, design a decentralized time-coordination control law such that the coordination states $\gamma_i(t)$  converge for all $i$ exponentially near the equilibrium \eqref{obj1} and \eqref{obj2} under Assumptions~\ref{assum1}, \ref{assum2}, and \ref{assum3}.

\begin{remark}
    Designing a time-coordination control law for the coordination states $\gamma_i(t)$ that solves the Time-Coordinated Path-Following Problem guarantees the simultaneous arrival of all the UAVs at their respective destinations.
\end{remark}
\section{MAIN RESULT} \label{IV}
In this section, a decentralized time-coordination algorithm based on event-triggered communication (ETC) is presented. 

We propose the following decentralized control law 
\begin{align} \label{dyn1}
    \ddot{\gamma}_i(t)&=-b(\dot{\gamma}_i(t)-\dot{\gamma}_d(t)) \nonumber \\
    &\mathrel{\phantom{=}}-a\sum_{j\in\mathcal{N}_i(t)}(\gamma_i(t)-\hat{\gamma}_j(t))+\bar{\alpha}_i(e_{PF,i}(t)), \\
    \gamma_i(0)&=\gamma_{i0}, \ \ \dot{\gamma}_i(0)=\dot{\gamma}_{i0}, \nonumber
\end{align}
where $a$ and $b$ are positive coordination control gains and $\bar{\alpha}_i(e_{PF,i}(t))$ is defined as 
\begin{align} \label{alpha}
    \bar{\alpha}_i(e_{PF,i}(t))=\frac{\dot{p}_{d,i}(\gamma_i(t))^\top e_{PF,i}(t)}{\|\dot{p}_{d,i}(\gamma_i(t))\|+\eta}
\end{align}
with $\eta$ being a positive design parameter. 
\begin{remark}
    The term $\bar{\alpha}_i(e_{PF,i}(t))$ is designed in a way that  helps the UAV remain inside the region of attraction of the path-following controller \cite{CICHELLA201313}. When it precedes (falls behind of) the desired place, the numerator becomes positive (negative), which accelerates (decelerates) progression of the desired position along the trajectory leading to reduction in $\|e_{PF,i}(t)\|$. 
\end{remark}
In \eqref{dyn1}, an estimate $\hat{\gamma}_j(t)$ of $\gamma_j(t)$ is used because the~$i$th UAV intermittently receives $\gamma_j(t)$ from the $j$th UAV employing ETC. Therefore, it is a reasonable strategy for the~$i$th UAV to estimate $\hat{\gamma}_j(t)$ and to use this estimate to adjust its coordination state $\gamma_i(t)$. The estimator of $\gamma_j(t)$ is given below: 
\begin{equation} \label{estimator}
\hat{\gamma}_j(t):\left\{ \begin{aligned} 
  &\ddot{\hat{\gamma}}_j(t)=-b(\dot{\hat{\gamma}}_j(t)-\dot{\gamma}_d(t)), \ \ t\geq t^j_{k_j(t)} \\
  &\dot{\hat{\gamma}}_j(t^j_{k_j(t)})=\dot{\gamma}_j(t^j_{k_j(t)}), \ \ \hat{\gamma}_j(t^j_{k_j(t)})=\gamma_j(t^j_{k_j(t)})
\end{aligned} \right.,
\end{equation}
where $t^j_{k_j(t)}$ denotes the most recent 
time the $j$th UAV transmitted its coordination state ${\gamma}_j(t)$. 

Let the estimation error be $e_j(t)\trieq\hat{\gamma}_j(t)-\gamma_j(t)$. Then \eqref{dyn1} can be rewritten as follows
\begin{align} \label{dyn1'}
    \ddot{\gamma}_i(t)&=-b(\dot{\gamma}_i(t)-\dot{\gamma}_d(t))-a\sum_{j\in\mathcal{N}_i(t)}(\gamma_i(t)-\gamma_j(t)) \nonumber \\
    &\mathrel{\phantom{=}}+a\sum_{j\in\mathcal{N}_i(t)}e_j(t)+\bar{\alpha}_i(e_{PF,i}(t)), \\
    \gamma_i(0)&=\gamma_{i0}, \ \ \dot{\gamma}_i(0)=\dot{\gamma}_{i0}. \nonumber
\end{align}
As it can be seen in \eqref{dyn1'}, the error $e_j(t)$ has an explicit impact on the time-coordination dynamics. The key point of ETC algorithm is to ensure that $|e_j(t)|$ remains bounded. Therefore, when $|e_j(t)|$ reaches a given threshold, i.e., a transmission event is triggered, the $j$th UAV  transmits the time instant $t$ at which this event occurred as well as $\gamma_j(t)$, $\dot{\gamma}_j(t)$. To be more specific, we define an event-triggering function $\delta_j(t)$ as
\begin{align}
    \delta_j(t)=|e_j(t)|-h(t),
\end{align}
where $h(t)=c_1+c_2e^{-a t}$, $c_1,c_2,a\in\mathbb{R}_{\geq 0}$ is referred to as a threshold function, and $c_1\leq h(t)\leq c_1+c_2$. Whenever $\delta_j(t)> 0$, sampling and transmission take~place with $|e_j(t)|$ reset to $0$. With this ETC algorithm, boundedness of the error is guaranteed: $|e_j(t)|\leq h(t)$.
\begin{remark}
    The $i$th UAV uses the estimator \eqref{estimator} to propagate the time-coordination controller \eqref{dyn1}. On the other hand, the $j$th UAV executes the same code to check whether the event triggering condition $|e_j(t)|>h(t)$ is satisfied.
\end{remark}
\begin{remark}
    Due to the time-varying nature of the underlying network, there might be no available transmission channels when sampling occurs. Then the UAV continues sampling the data irrespective of the network topology and transmits the most recently sampled data $(t^j_{k_j(t)},\gamma_j(t^j_{k_j(t)}),\dot{\gamma}_j(t^j_{k_j(t)}))$ when it has available transmission channels.  
\end{remark}
For the ease of stability analysis, we introduce the coordination error state $\xi_{TC}(t)=[\xi_1(t)^\top \ \xi_2(t)^\top]^\top$ with
\begin{equation}
\begin{aligned} \label{error}
    \xi_1(t)&=Q\gamma(t) \ \in \mathbb{R}^{n-1}, \\
    \xi_2(t)&=\dot{\gamma}(t)-\dot{\gamma}_d(t)1_n \ \in \mathbb{R}^{n},
\end{aligned}
\end{equation}
where $\gamma(t)=[\gamma_1(t), \dots, \gamma_n(t)]^\top$, and $Q\in\mathbb{R}^{(n-1)\times n}$ is a matrix that satisfies $Q1_n=0_{n-1}$ and $QQ^\top=\mathbb{I}_{n-1}$. 
\begin{remark}
    A matrix $Q_k\in\mathbb{R}^{(k-1)\times k}$, $k\geq2$, satisfying $Q_k1_k=0_{k-1}$ and $Q_k\left(Q_k\right)^\top=\mathbb{I}_{k-1}$, can be constructed recursively:
    \begin{align*}
        Q_k=
        \begin{bmatrix}
        \sqrt{\frac{k-1}{k}} & -\frac{1}{\sqrt{k(k-1)}}1_{k-1}^\top \\
        0 & Q_{k-1} \\
        \end{bmatrix}
    \end{align*}
    with initial condition $Q_2=[1/\sqrt{2} \ -1/\sqrt{2}]$. For notational simplicity, we denote $Q_n$ by $Q$, where $n$ is the number of the UAVs.
\end{remark}
It is shown in \cite[Lemma~7]{phdenric2013} that $Q^\top Q=\mathbb{I}_n-\frac{1_n1_n^\top}{n}$ and the nullspace of Q is spanned by $1_n$. Now suppose that $\xi_1(t)=Q\gamma(t)=0_{n-1}$. Then since the nullspace of Q is spanned by $1_n$, we obtain that $\gamma_i(t)=\gamma_j(t)$, $\forall i,j\in\{1,\dots,n\}$. Furthermore, $\xi_2(t)=0_n$ implies that 
$\dot{\gamma}_i(t)=\dot{\gamma}_d(t)$, $\forall i\in\{1,\dots,n\}$. Therefore, $\xi_{TC}(t)=0_{2n-1}$ is equivalent to \eqref{obj1} and \eqref{obj2}. 

Using definitions of $\gamma(t)$, $L(t)$ and $\mathcal{A}(t)$, the expression \eqref{dyn1'} can be rewritten in a more compact form
\begin{align} 
    \ddot{\gamma}(t)&=-b\xi_2(t)-aL(t)\gamma(t)+a\mathcal{A}(t)e(t)+\bar{\alpha}(e_{PF}(t)), \nonumber\\
    \gamma(0)&=\gamma_0, \ \ \dot{\gamma}(0)=\dot{\gamma}_0, \nonumber
\end{align} 
where $\bar{\alpha}(e_{PF}$ $(t))=[\bar{\alpha}_1(e_{PF,1}(t)),\dots,\bar{\alpha}_n(e_{PF,n}(t))]^\top$ and $e(t)=[e_1(t),\dots,e_n(t)]^\top$.
\begin{lemma} \label{lem1}
Consider the following dynamics
\begin{align} \label{lem1:equ}
    \dot{x}=-\frac{a}{b}L(t)x, \ \ \ x(0)=x_0\in\mathbb{R}^n.
\end{align}
Under Assumption~\ref{assum3} on $L(t)$, the components $x_i$'s of $x$ reach consensus exponentially
\begin{align}
    diam\left(x\left(t\right)\right)&\trieq\max\limits_{i}\{x_i(t)\}-\min\limits_{i}\{x_i(t)\} \\
    &\leq diam\left(x(0)\right)ke^{-\lambda t}, \label{lem2:equ2}
\end{align}
where $k\trieq\frac{1}{1-(\delta')^n}$ and $\lambda\trieq-\frac{1}{nT}\ln(1-(\delta')^n)$ with $\delta'\trieq \min\left\{1,\frac{a}{b}\delta\right\}e^{-(n-1)\frac{a}{b}T}$.
\end{lemma}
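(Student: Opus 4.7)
The plan is to interpret the linear time-varying dynamics \eqref{lem1:equ} as a continuous-time consensus protocol and analyze its state transition matrix $\Phi(t,s)$ through its Markov-type structure. Because $L(t)1_n=0_n$ and the off-diagonal entries of $-L(t)$ are non-negative, the fundamental solution $\Phi(t,s)$ is row-stochastic: $\Phi(t,s)1_n=1_n$ and $\Phi_{ij}(t,s)\ge 0$. Since $x(t)=\Phi(t,0)x_0$ writes every component $x_i(t)$ as a convex combination of $x_1(0),\dots,x_n(0)$, the quantity $\operatorname{diam}(x(t))$ is non-increasing along trajectories, giving an a priori diameter monotonicity that I will later refine into a strict contraction.

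Next, I would establish the two quantitative bounds that form the technical heart of the argument. First, since $L_{ii}(t)\le n-1$, a Peano--Baker-type lower bound on the ``self-memory'' term yields $\Phi_{ii}(t+T,t)\ge e^{-(n-1)\frac{a}{b}T}$. Second, if $(i,j)$ is a $\delta$-edge of $\int_t^{t+T}L(\tau)d\tau$, using the variation-of-constants representation of $\Phi(t+T,t)$ together with the non-negativity of all off-diagonal contributions gives
\begin{equation*}
\Phi_{ij}(t+T,t)\ \ge\ \min\!\left\{1,\tfrac{a}{b}\delta\right\}e^{-(n-1)\frac{a}{b}T}\ =\ \delta'.
\end{equation*}
Chaining these edge bounds along the $\delta$-spanning tree promised by Assumption~\ref{assum3} across $n$ consecutive windows of length $T$, and using the self-memory factor to carry mass through intermediate steps, produces a column of $\Phi(t+nT,t)$ whose every entry is at least $(\delta')^n$.

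From here the exponential estimate follows by a Dobrushin-style coefficient-of-ergodicity argument: a row-stochastic matrix possessing a column of minimum entry $\mu>0$ contracts the diameter of any vector by the factor $1-\mu$, so
\begin{equation*}
\operatorname{diam}(x(t+nT))\ \le\ (1-(\delta')^n)\operatorname{diam}(x(t)).
\end{equation*}
Iterating this one-period contraction across $m=\lfloor t/(nT)\rfloor$ periods and using the within-period monotonicity from the first step yields
\begin{equation*}
\operatorname{diam}(x(t))\ \le\ (1-(\delta')^n)^{m}\operatorname{diam}(x(0))\ \le\ \frac{1}{1-(\delta')^n}\,e^{-\lambda t}\,\operatorname{diam}(x(0)),
\end{equation*}
with $\lambda=-\frac{1}{nT}\ln(1-(\delta')^n)$, matching the claimed $k$ and $\lambda$.

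The main obstacle I anticipate is the chained column bound on $\Phi(t+nT,t)$: the $\delta$-spanning tree, and in particular its root, can change from window to window, so propagating a single column's positivity through the product $\Phi(t+nT,t)=\Phi(t+nT,t+(n-1)T)\cdots\Phi(t+T,t)$ requires the diagonal lower bound $e^{-(n-1)(a/b)T}$ to preserve the mass injected in earlier windows while the ``active root'' shifts. This is precisely why both the extra self-memory factor inside $\delta'$ and the full $n$ (rather than $n-1$) windows appear in the final rate.
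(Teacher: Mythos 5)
Your overall strategy (row-stochastic transition matrix, diagonal and $\delta$-edge lower bounds on $\Phi(t+T,t)$, a contraction-coefficient estimate iterated over periods of length $nT$) is the natural self-contained route, and indeed the paper itself does not spell any of this out: its ``proof'' is a one-line citation to the first half of the proof of Theorem~1 in \cite{HAN2015196}. However, your argument has a genuine gap exactly at the step you yourself flag as the main obstacle, and the fix you sketch does not close it. You claim that chaining the edge bounds along the $\delta$-spanning trees of $n$ consecutive windows produces a \emph{column} of $\Phi(t+nT,t)$ that is entrywise at least $(\delta')^n$. Fix a candidate column $c$ and let $I_c$ be the set of nodes already influenced by $c$ (i.e., the rows with a positive $c$-entry in the partial product). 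In a given window, $I_c$ is guaranteed to grow only if that window's integrated graph has a $\delta$-edge from $I_c$ into its complement; this is guaranteed when the window's root lies in $I_c$ (cut argument along the tree), but Assumption~\ref{assum3} allows the root to be a different node in every window, and if the root lies outside $I_c$ the tree may have no edge leaving $I_c$ at all (all of $I_c$ can be leaves). The ``self-memory'' diagonal factor only preserves influence already acquired; it does not force new spreading when the active root moves away from $I_c$. So positivity of a single fixed column after $n$ windows is not established by your chaining, and it is not a safe claim to assert without proof.

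The standard repair, which also reproduces exactly the constants $k=\frac{1}{1-(\delta')^n}$ and $\lambda=-\frac{1}{nT}\ln(1-(\delta')^n)$, is to prove the weaker \emph{scrambling} property and use Hajnal's inequality instead of the positive-column (Dobrushin) special case. Concretely, for each node $i$ let $B_i(k)$ be the set of nodes whose state at the start of window $k$ influences $x_i$ at the end of the block of windows (a backward reachability set); going backward through a window whose tree is rooted at $r$, the cut argument shows $B_i$ grows strictly unless $r\in B_i$. Hence for any pair $i,i'$, as long as $B_i\cap B_{i'}=\emptyset$ the root can belong to at most one of them, so $|B_i|+|B_{i'}|$ increases by at least one per window; since it starts at $2$ and cannot exceed $n$ while the sets are disjoint, $B_i$ and $B_{i'}$ intersect within $n-1$ windows, with the common entry bounded below by powers of $\delta'$ (one $\delta$-edge factor per growth step plus diagonal factors). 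This gives $\sum_j\min\{\Phi_{ij}(t+nT,t),\Phi_{i'j}(t+nT,t)\}\geq(\delta')^n$ for every pair $i,i'$, and Hajnal's inequality then yields the per-period contraction $\mathrm{diam}(x(t+nT))\leq\bigl(1-(\delta')^n\bigr)\mathrm{diam}(x(t))$, after which your iteration and the conversion to $k e^{-\lambda t}$ go through unchanged. All your other steps (row-stochasticity of $\Phi$, the bounds $\Phi_{ii}(t+T,t)\geq e^{-(n-1)\frac{a}{b}T}$ and $\Phi_{ij}(t+T,t)\geq\delta'$ for $\delta$-edges, diameter monotonicity, and the final bookkeeping) are sound; only the globally positive column must be replaced by the pairwise argument, or else the step must be justified by the cited result as the paper does.
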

\begin{proof}
    The proof follows from the first half of the proof of \cite[Theorem~1]{HAN2015196}.
\end{proof}

\begin{lemma} \label{lem2}
The quantities $\|Qx\|$ and $diam(x)$ 
satisfy the following inequalities
\begin{align} \label{lem2:equ}
    \frac{1}{\sqrt{n}}\|Qx\|\leq diam(x)\leq \sqrt{2}\|Qx\|.
\end{align}
\end{lemma}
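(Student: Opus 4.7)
The plan is to rewrite $\|Qx\|^2$ in terms of the mean-centered entries of $x$ and then bound the resulting sum of squares against $\text{diam}(x)$ from both sides. Concretely, since the excerpt already invokes $Q^\top Q = \mathbb{I}_n - \tfrac{1}{n}\mathbf{1}_n\mathbf{1}_n^\top$ from \cite[Lemma~7]{phdenric2013}, I would start by computing
\begin{equation*}
\|Qx\|^2 \;=\; x^\top Q^\top Q x \;=\; \|x\|^2 - \tfrac{1}{n}(\mathbf{1}_n^\top x)^2 \;=\; \sum_{i=1}^{n}(x_i-\bar{x})^2,
\end{equation*}
where $\bar{x} \triangleq \tfrac{1}{n}\sum_{i=1}^n x_i$. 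This is the key identity that turns the mixed norm-versus-diameter comparison into a purely scalar statistics question.

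For the upper bound on $\text{diam}(x)$, let $M \triangleq \max_i x_i$ and $m \triangleq \min_i x_i$, so $\text{diam}(x) = M - m$ and $m \leq \bar{x} \leq M$. Writing $M - m = (M - \bar{x}) + (\bar{x} - m)$ and applying the elementary inequality $(p+q)^2 \leq 2(p^2 + q^2)$ gives
\begin{equation*}
\text{diam}(x)^2 \;\leq\; 2\bigl[(M-\bar{x})^2 + (\bar{x}-m)^2\bigr] \;\leq\; 2\sum_{i=1}^{n}(x_i-\bar{x})^2 \;=\; 2\|Qx\|^2,
\end{equation*}
since the two squared deviations on the left are individual terms of the full sum. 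Taking square roots yields $\text{diam}(x) \leq \sqrt{2}\|Qx\|$.

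For the lower bound, I would use the trivial pointwise estimate $|x_i - \bar{x}| \leq M - m = \text{diam}(x)$ for every $i$, which holds because $x_i, \bar{x} \in [m, M]$. Summing the squares over $i$ gives
\begin{equation*}
\|Qx\|^2 \;=\; \sum_{i=1}^{n}(x_i-\bar{x})^2 \;\leq\; n\,\text{diam}(x)^2,
\end{equation*}
and taking square roots produces $\tfrac{1}{\sqrt{n}}\|Qx\| \leq \text{diam}(x)$. I do not expect any real obstacle here: once the identity $\|Qx\|^2 = \sum_i(x_i-\bar{x})^2$ is in hand, both bounds reduce to a one-line estimate on sums of mean-centered deviations, and no additional structure of $Q$ is needed beyond what is stated in the paper.
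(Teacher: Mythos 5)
Your proposal is correct and follows essentially the same route as the paper: both rest on the identity $\|Qx\|^2=\sum_{i=1}^n\bigl(x_i-\tfrac{1_n^\top x}{n}\bigr)^2$ (via $Q^\top Q=\mathbb{I}_n-\tfrac{1}{n}1_n1_n^\top$), bound each deviation by the range for the lower bound, and split $\max-\min$ through the mean with $(p+q)^2\le 2(p^2+q^2)$ for the upper bound. No substantive difference from the paper's argument.
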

\begin{proof}
The first inequality in \eqref{lem2:equ} follows from 
\begin{align*}
    \resizebox{\hsize}{!}{$\|Qx\|^2=x^\top Q^\top Qx=x^\top \left(\mathbb{I}_n-\frac{1_n1_n^\top}{n}\right) x=n\left\{\frac{1}{n}\sum_{i=1}^{n}x_i^2-\left(\frac{1_n^\top x}{n}\right)^2\right\}$} \\
    \resizebox{\hsize}{!}{$=\sum_{i=1}^{n}\left(x_i-\frac{1_n^\top x}{n}\right)^2\leq n\left(\max\limits_{i}\{x_i\}-\min\limits_{i}\{x_i\}\right)^2=n\left\{diam(x)\right\}^2$}
\end{align*}
The second inequality in \eqref{lem2:equ} follows from 
\begin{align*}
    \resizebox{\hsize}{!}{$\left\{diam(x)\right\}^2=\left(\max\limits_{i}\{x_i\}-\min\limits_{i}\{x_i\}\right)^2=\left(\max\limits_{i}\{x_i\}-\frac{1_n^\top x}{n}+\frac{1_n^\top x}{n}-\min\limits_{i}\{x_i\}\right)^2$} \\
    \resizebox{\hsize}{!}{$\leq2\left(\max\limits_{i}\{x_i\}-\frac{1_n^\top x}{n}\right)^2+2\left(\frac{1_n^\top x}{n}-\min\limits_{i}\{x_i\}\right)^2\leq2\sum_{i=1}^{n}\left(x_i-\frac{1_n^\top x}{n}\right)^2=2\|Qx\|^2$}
\end{align*}
\end{proof} 
\begin{remark}
    When the elements $x_i$ of a vector $x$ reach consensus, i.e., $x_1=\cdots=x_n$, both $\|Qx\|$ and $diam(x)$ are zero. Otherwise, they have positive values, which become larger as $x_i$'s diverge away from consensus. In other words, $\|Qx\|$ and $diam(x)$ quantify the discoordination among $x_i$'s. Moreover, Lemma \ref{lem2} implies that $\|Qx\|$ and $diam(x)$ are equivalent measures of the discoordination among $x_i$'s. 
\end{remark}
The main results of the paper are presented in the following theorem.
\begin{theorem} \label{thm}
Consider a set of desired trajectories \eqref{trajectory} and a path-following controller that ensures \eqref{pf_error}. Let the evolution of $\gamma_i(t)$ be governed by \eqref{dyn1} over the network $\mathcal{D}(t)$ satisfying Assumption~\ref{assum3}. Then, there exist time coordination control gains $a$, $b$, and $\eta$ such that
\begin{equation}
\begin{aligned}
    \|\xi_{TC}(t)\|&\leq\kappa_1\|\xi_{TC}(0)\|e^{-\lambda_{TC}t} \\&\mathrel{\phantom{=}}+\kappa_2\sup_{t\geq0}\left(an\sqrt{n}h(t)+\|e_{PF}(t)\|+|\ddot{\gamma}_d(t)|\right) \nonumber
\end{aligned}
\end{equation}
with rate of convergence
\begin{align} \label{rate}
    \lambda_{TC}\leq\frac{\lambda}{6nk^2},
\end{align}
where $\lambda$ and $k$ were defined in Lemma~\ref{lem1}. 

Moreover, the event-triggered time interval $t^i_{k+1}-t^i_k$ is  bounded below. Here, $t^i_k$ denotes the $k$th event-triggered time of the $i$th UAV.
\end{theorem}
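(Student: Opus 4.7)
The plan is to split the proof into two parts: (i) the exponential ultimate bound on $\xi_{TC}(t)$, and (ii) the strictly positive lower bound on the inter-event interval.

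For (i), I would first derive the closed-loop dynamics of $\xi_{TC}=[\xi_1^\top\ \xi_2^\top]^\top$. Using $Q 1_n=0$ and $L(t)1_n=0$, differentiating \eqref{error} and substituting the compact form of \eqref{dyn1'} gives
\begin{align*}
\dot\xi_1 &= Q\xi_2,\\
\dot\xi_2 &= -b\,\xi_2 - a\,L(t)Q^\top\xi_1 + a\,\mathcal{A}(t)e(t) + \bar{\alpha}(e_{PF}(t)) - \ddot{\gamma}_d(t)1_n,
\end{align*}
where I have used $L(t)\gamma(t)=L(t)Q^\top Q\gamma(t)=L(t)Q^\top\xi_1$. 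This is a cascaded linear time-varying system driven by the exogenous perturbations $a\mathcal{A}(t)e(t)$, $\bar\alpha(e_{PF})$, and $\ddot{\gamma}_d\,1_n$, with $\|e(t)\|\le\sqrt{n}\,h(t)$, $\|\mathcal{A}(t)\|\le n$, $\|\bar\alpha\|\le\|e_{PF}\|$ by \eqref{alpha}.

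Next I would establish exponential stability of the nominal (unperturbed) system. The key observation is that, when $e=0$, $\bar\alpha=0$, $\ddot\gamma_d=0$ and $b$ is chosen large enough, the velocity error $\xi_2$ quickly aligns with the consensus-gradient direction, and the slow dynamics of $\xi_1$ reduce to a disagreement dynamics driven by $L(t)$. To make this rigorous, I would introduce the auxiliary variable $z(t)=\gamma(t)+\tfrac{1}{b}\bigl(\dot\gamma(t)-\dot\gamma_d(t)1_n\bigr)$, whose unperturbed evolution is $\dot z=-\tfrac{a}{b}L(t)\gamma$, a form amenable to Lemma~\ref{lem1}. Lemma~\ref{lem1} yields $\mathrm{diam}(z(t))\le k\,\mathrm{diam}(z(0))e^{-\lambda t}$, and Lemma~\ref{lem2} converts this between $\|Q\cdot\|$ and $\mathrm{diam}(\cdot)$, giving exponential decay of $\|Q z\|$ with rate $\lambda$ and a factor $\sqrt{2n}\,k$. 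Combining with the fast contraction of $\xi_2-(z-\gamma) = -\tfrac{1}{b}\xi_2$-type terms, I would obtain exponential decay of $\|\xi_{TC}\|$ at the reduced rate $\lambda_{TC}\le\lambda/(6nk^2)$, the $6nk^2$ factor arising from the norm-equivalence constants in Lemma~\ref{lem2} and the coupling between $\xi_1$ and $\xi_2$. The hardest step here is choosing the gains $a,b,\eta$ so that the cross-coupling induced by $\bar\alpha$ (which depends on $\gamma$ through $p_{d,i}(\gamma_i)$) and by $L(t)Q^\top\xi_1$ in the $\xi_2$-equation does not destroy the separation of time-scales; this is where the bound $\eta>0$ in \eqref{alpha} plays its role, keeping $|\bar\alpha_i|\le\|e_{PF,i}\|$ uniformly.

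Finally, for the perturbed system, a standard ISS/comparison argument applied to the exponentially stable nominal dynamics yields
\[
\|\xi_{TC}(t)\|\le\kappa_1\|\xi_{TC}(0)\|e^{-\lambda_{TC}t}+\kappa_2\sup_{s\ge0}\bigl(an\sqrt{n}\,h(s)+\|e_{PF}(s)\|+|\ddot\gamma_d(s)|\bigr),
\]
with the $an\sqrt{n}h$ term originating from the bound $\|a\mathcal{A}(t)e(t)\|\le a n\sqrt{n}\,h(t)$.

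For (ii), I would bound $|\dot e_j(t)|$ between consecutive events. Subtracting the estimator dynamics \eqref{estimator} from the true dynamics of $\gamma_j$ gives
\[
\ddot e_j(t)= -b\,\dot e_j(t) + a\!\!\sum_{k\in\mathcal{N}_j(t)}\!(\gamma_j(t)-\hat\gamma_k(t)) - \bar\alpha_j(e_{PF,j}(t)),
\]
a linear ODE in $e_j,\dot e_j$ driven by quantities already shown to be uniformly bounded (from part (i) and \eqref{pf_error}). Hence $|\dot e_j(t)|$ admits a uniform bound $M$ on any interval $[t^j_k,t^j_{k+1})$. Since $e_j(t^j_k)=0$ and $h(t)\ge c_1+c_2 e^{-a t}>0$ for all $t$ (strictly positive threshold), the inequality $|e_j(t)|\le M(t-t^j_k)$ combined with the triggering rule $|e_j(t^j_{k+1})|=h(t^j_{k+1})\ge\inf_{t\ge0}h(t)>0$ forces
\[
t^j_{k+1}-t^j_k \;\ge\; \frac{\inf_{t\ge0}h(t)}{M} \;>\;0,
\]
ruling out Zeno behavior. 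The main obstacle in this step is obtaining $M$ independently of $k$, which requires the forward-invariant bound on $\gamma_j-\hat\gamma_k$ established in part (i); so parts (i) and (ii) must be carried out jointly, e.g.\ via a small-gain/bootstrap argument on a maximal interval of existence and then extended to $[0,\infty)$.
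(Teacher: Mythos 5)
Your part (ii) is essentially the paper's argument: write the estimation-error dynamics as a linear ODE driven by $u_i=a\sum_{j\in\mathcal{N}_i}(\gamma_i-\hat\gamma_j)-\bar\alpha_i$, bound the drive uniformly using the ISS estimate from part (i) together with \eqref{pf_error} and $|e_j|\leq h\leq c_1+c_2$, integrate from the zero initial condition at each event, and use $h(t)\geq c_1>0$ to get a uniform positive dwell time. Your remark that parts (i) and (ii) must be closed jointly on a maximal interval of existence is a fair (and arguably more careful) point, and your Cauchy--Schwarz bound $|\bar\alpha_i|\leq\|e_{PF,i}\|$ for any $\eta>0$ is fine.

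The genuine gap is in part (i). Your auxiliary variable $z=\gamma+\tfrac1b(\dot\gamma-\dot\gamma_d1_n)$ is, up to the factor $b$ and projection by $Q$, exactly the paper's $\chi=b\xi_1+Q\xi_2$, but the claim that its unperturbed evolution ``$\dot z=-\tfrac ab L(t)\gamma$'' is ``a form amenable to Lemma~\ref{lem1}'' is not correct: Lemma~\ref{lem1} applies to the self-contained system $\dot x=-\tfrac ab L(t)x$, whereas $L(t)\gamma=L(t)z-\tfrac1bL(t)\xi_2$, so the $z$ (equivalently $\chi$) subsystem is the consensus dynamics \emph{perturbed} by the coupling $\tfrac{a}{b^2}L(t)\xi_2$, and symmetrically the $\xi_2$ equation is driven by $\tfrac ab L(t)Q^\top\chi$. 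You therefore cannot conclude $\mathrm{diam}(z(t))\leq k\,\mathrm{diam}(z(0))e^{-\lambda t}$ for the closed loop from Lemma~\ref{lem1}; handling these cross-couplings and extracting the rate $\lambda_{TC}\leq\lambda/(6nk^2)$ is precisely the step you flag as ``the hardest'' and then leave unresolved. The paper's mechanism is to convert the trajectory estimate of Lemmas~\ref{lem1}--\ref{lem2} into a Lyapunov certificate: since $\dot\phi=-\tfrac ab\bar L(t)\phi$ with $\bar L=QLQ^\top$ is GUES with constants $k_\phi=\sqrt{2n}\,k$ and $\lambda$, the converse Lyapunov theorem (Khalil, Thm.~4.12) provides $\Psi(t)$ with explicit bounds $\tfrac{bc_3}{2an}\mathbb{I}_{n-1}\leq\Psi(t)\leq\tfrac{k_\phi^2c_4}{2\lambda}\mathbb{I}_{n-1}$; the composite function $V_{TC}=\chi^\top\Psi\chi+\tfrac\beta2\|\xi_2\|^2$ then reduces the cross-terms to a $2\times2$ matrix inequality that is satisfied by fixing $a/b$ (which freezes $k$, $\lambda$, $k_\phi$) and taking $b$ sufficiently large, giving $\dot V_{TC}\leq-3\lambda_{TC}V_{TC}+(\cdot)\|\bar\xi_{TC}\|$ and, via Khalil's Lemma~4.6 and the transformation $\bar\xi_{TC}=S\xi_{TC}$, the stated bound with $\kappa_1,\kappa_2$. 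Your proposal contains neither this converse-Lyapunov construction nor any substitute quantitative argument (e.g.\ an explicit small-gain computation between the two subsystems with a concrete gain choice), so the nominal exponential stability with the claimed rate --- the core of the theorem --- is asserted rather than proved.
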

\begin{proof}
Motivated by \cite{Cichella2015945}, we introduce a new state
\begin{align*}
    \chi(t)=b\xi_1(t)+Q\xi_2(t).
\end{align*}
Then, the coordination error state $\xi_{TC}(t)=[\xi_1(t)^\top \ \xi_2(t)^\top]^\top$ can be redefined as $\bar{\xi}_{TC}(t)=[\chi(t)^\top \ \xi_2(t)^\top]^\top$ with dynamics
\begin{equation}
\begin{aligned} \label{dyn3}
    \dot{\chi}&=-\frac{a}{b}\bar{L}(t)\chi+\frac{a}{b}QL(t)\xi_2+aQ\mathcal{A}(t)e+Q\bar{\alpha}(e_{PF}) \\
    \dot{\xi}_2&=-\frac{a}{b}L(t)Q^\top\chi-\left(b\mathbb{I}_n-\frac{a}{b}L(t)\right)\xi_2+a\mathcal{A}(t)e \\
    &\mathrel{\phantom{=}}+ \ \bar{\alpha}(e_{PF})-\ddot{\gamma}_d1_n,
\end{aligned}
\end{equation}
where $\bar{L}(t)\trieq QL(t)Q^\top \in \mathbb{R}^{(n-1)\times(n-1)}$. \\
In order to construct a Lyapunov function candidate for \eqref{dyn3}, we first show that the following auxiliary system
\begin{align} \label{aux}
    \dot{\phi}(t)=-\frac{a}{b}\bar{L}(t)\phi(t), \ \ \ \phi(0)=\phi_0\in\mathbb{R}^{n-1}
\end{align}
is globally uniformly exponentially stable (GUES). As $Q\in\mathbb{R}^{(n-1)\times n}$ is a full rank matrix, there exists $x_0\in\mathbb{R}^n$ such that $\phi_0=Qx_0$. Let $x(t)$ be the solution of \eqref{lem1:equ}. Then $Qx(t)$ is a unique solution of \eqref{aux}:
\begin{align*}
    \dot{\phi}+\frac{a}{b}\bar{L}(t)\phi&=Q\dot{x}+\frac{a}{b}\bar{L}(t)Qx=Q\left(\dot{x}+\frac{a}{b}L(t)Q^\top Qx\right) \\
    &=Q\left(\dot{x}+\frac{a}{b}L(t)x\right)\equiv0,
\end{align*}
where the third equality follows from the fact that $L(t)Q^\top Q=L(t)$. Therefore, we have
\begin{align*}
    \|\phi\|&=\|Qx\|\leq\sqrt{n} \ diam(x)\leq \sqrt{n} \ diam(x_0) ke^{-\lambda t} \\
    &\leq \sqrt{n}\bigl(\sqrt{2}\|Qx_0\|\bigr)ke^{-\lambda t}= k_\phi\|\phi_0\|e^{-\lambda t},
\end{align*}
where $k_\phi\trieq\sqrt{2n}k$, the second inequality follows from \eqref{lem2:equ2}, and the other inequalities follow from \eqref{lem2:equ}. The GUES of \eqref{aux} can now be used to define a Lyapunov function candidate to analyze stability of \eqref{dyn3} as presented in the subsequent discussion. \\
Since the system \eqref{aux} is GUES, Theorem $4.12$ in \cite{kha2002} implies that there exists a continuously differentiable, symmetric, positive definite matrix $\Psi(t)$ such that
\begin{align}
    c_1\mathbb{I}_{n-1}\trieq \frac{bc_3}{2an}\mathbb{I}_{n-1}\leq \Psi(t) \leq \frac{k^2_\phi c_4}{2\lambda}\mathbb{I}_{n-1}\trieq c_2\mathbb{I}_{n-1}, \label{p1} \\
    \dot{\Psi}(t)-\frac{a}{b}\bar{L}^\top(t) \Psi(t)-\frac{a}{b}\Psi(t)\bar{L}(t)\leq -c_3\mathbb{I}_{n-1}, \label{p2}
\end{align}
where $c_3$ and $c_4$ are any constants satisfying $0<c_3\leq c_4$. \\
Using $\Psi(t)$, we define a Lyapunov function candidate for \eqref{dyn3} as follows:
\begin{align} \label{lya}
    V_{TC}(t)=\chi^\top \Psi(t)\chi+\frac{\beta}{2}\|\xi_2\|^2=\bar{\xi}^\top_{TC} W(t)\bar{\xi}_{TC},
\end{align}
where $\beta>0$ and $W(t)\trieq\begin{bmatrix}
\Psi(t) & 0 \\ 
0 & \frac{\beta}{2}\mathbb{I}_n
\end{bmatrix}$. Notice that $V_{TC}(t)$ satisfies $z^\top M_1z\leq V_{TC}(t)\leq z^\top M_2z$, where $z\trieq[\|\chi\| \ \|\xi_2\|]^\top$ with 
\begin{align*} 
M_1\trieq\begin{bmatrix}
c_1 & 0 \\ 
0 & \beta/2
\end{bmatrix} \text{ and }
M_2\trieq\begin{bmatrix}
c_2 & 0 \\ 
0 & \beta/2
\end{bmatrix}.
\end{align*}
The time derivative of \eqref{lya} along the trajectory of \eqref{dyn3} satisfies \\
\begin{adjustbox}{max width=\linewidth}
\parbox{\linewidth}{\begin{align*}
    \dot{V}_{TC}&=\chi^\top\left(\dot{\Psi}(t)-\frac{a}{b}\bar{L}^\top(t) \Psi(t)-\frac{a}{b}\Psi(t)\bar{L}(t)\right)\chi \\
    &\mathrel{\phantom{=}}-\beta\xi^\top_2\left(b\mathbb{I}_n-\frac{a}{b}L(t)\right)\xi_2 \\
    &\mathrel{\phantom{=}}+\chi^\top\left(2\frac{a}{b}\Psi(t)QL(t)-\beta\frac{a}{b}QL^\top(t)\right)\xi_2 \\
    &\mathrel{\phantom{=}}+\left(2\chi^\top \Psi(t)Q+\beta\xi^\top_2\right)\left(a\mathcal{A}(t)e+\bar{\alpha}(e_{PF})\right)-\beta\xi^\top_2\ddot{\gamma}_d1_n,
\end{align*}}
\end{adjustbox}
which leads to \\
\begin{adjustbox}{max width=\linewidth}
\parbox{\linewidth}{\begin{align*}
    \dot{V}_{TC}\leq& -c_3\|\chi\|^2-\beta\left(b-\frac{a}{b}n\right)\|\xi_2\|^2 \\
    &+\left(2\frac{a}{b}n\|\Psi(t)\|+\beta\frac{a}{b}n\right)\|\chi\|\|\xi_2\| \\
    &+\left(2\|\Psi(t)\|\|\chi\|+\beta\|\xi_2\|\right)\left(an\|e\|+\|\bar{\alpha}(e_{PF})\|+|\ddot{\gamma}_d|\right),
\end{align*}}
\end{adjustbox}
where we used \eqref{p2}, $\|Q\|=1$, $\|L(t)\|\leq n$, and $\|\mathcal{A}(t)\|\leq n$. \\
Applying $\|\Psi(t)\|\leq c_2=\frac{k^2_\phi c_4}{2\lambda}$ in \eqref{p1} yields \\
\begin{adjustbox}{max width=\linewidth}
\parbox{\linewidth}{\begin{align*}
    \dot{V}_{TC}\leq& -c_3\|\chi\|^2-\beta\left(b-\frac{a}{b}n\right)\|\xi_2\|^2 \\
    &+\left(\frac{a}{b}\frac{nk_\phi^2}{\lambda}c_4+\beta\frac{a}{b}n\right)\|\chi\|\|\xi_2\| \\
    &+\left(\frac{k^2_\phi c_4}{\lambda}+\beta\right)\|\bar{\xi}_{TC}\|\left(an\|e\|+\frac{v_{max}}{v_{min}+\eta}\|e_{PF}\|+|\ddot{\gamma}_d|\right),
\end{align*}}
\end{adjustbox}
where $v_{max}=\max_{i}\{v_{i,max}\}$ and $v_{min}=\max_{i}\{v_{i,min}\}$ with $v_{i,max}$ and $v_{i,min}$ being the maximum and minimum achievable speed of the $i$th UAV.
Letting $c_3=c_4$ and $\eta>v_{max}-v_{min}$, one obtains \\
\begin{adjustbox}{max width=\linewidth}
\parbox{\linewidth}{\begin{align*}
    \dot{V}_{TC}\leq -z^\top Uz+\left(\frac{k^2_\phi c_4}{\lambda}+\beta\right)\|\bar{\xi}_{TC}\|\left(an\|e\|+\|e_{PF}\|+|\ddot{\gamma}_d|\right),
\end{align*}}
\end{adjustbox}
where $z=[\|\chi\| \ \|\xi_2\|]^\top$ and
\begin{align*}
U\trieq\begin{bmatrix}
c_3 & -\frac{1}{2}\left(\frac{a}{b}\frac{nk_\phi^2}{\lambda}c_3+\beta\frac{a}{b}n\right) \\ 
-\frac{1}{2}\left(\frac{a}{b}\frac{nk_\phi^2}{\lambda}c_3+\beta\frac{a}{b}n\right) & \beta\left(b-\frac{a}{b}n\right)
\end{bmatrix}.
\end{align*}
Next, we introduce $\lambda_{TC}\leq\frac{2\lambda}{3k^2_\phi}=\frac{\lambda}{6nk^2}$ which defines the time-coordination error convergence rate. Consider
\begin{align} 
\label{UUU}
    &U-3\lambda_{TC}M_2 \nonumber \\
    &=
    \begin{bmatrix}
    c_3-\lambda_{TC} \frac{3k^2_\phi}{2\lambda}c_3 & -\frac{1}{2}\left(\frac{a}{b}\frac{nk_\phi^2}{\lambda}c_3+\beta\frac{a}{b}n\right) \\ 
    -\frac{1}{2}\left(\frac{a}{b}\frac{nk_\phi^2}{\lambda}c_3+\beta\frac{a}{b}n\right) & \beta\left(b-\frac{a}{b}n-\frac{3}{2}\lambda_{TC}\right)
    \end{bmatrix}. 
\end{align}
Note that for a fixed value of $\frac{a}{b}$, all the terms in \eqref{UUU} are fixed except for $\beta b$ in the $(2,2)$ element. This is because the values of $k_\phi$ and $\lambda$ are determined by the ratio of $\frac{a}{b}$. Thus, choosing a sufficiently large $b$ with a fixed $\frac{a}{b}$ makes sure that \eqref{UUU} is positive semi-definite. \\
Therefore, we obtain that $-z^\top Uz\leq-3\lambda_{TC}z^\top M_2z\leq-3\lambda_{TC}V_{TC}$ and thus the derivative of $V_{TC}$ is  bounded above by
\begin{align*}
    \dot{V}_{TC}&\leq -3\lambda_{TC}V_{TC} \\
    &\mathrel{\phantom{\leq}}+\left(\frac{k^2_\phi c_3}{\lambda}+\beta\right)\|\bar{\xi}_{TC}\|\left(an\|e\|+\|e_{PF}\|+|\ddot{\gamma}_d|\right) \\
    &\leq-2\lambda_{TC}V_{TC}-\lambda_{TC}\min\{c_1,\beta/2\}\|\bar{\xi}_{TC}\|^2 \\
    &\mathrel{\phantom{\leq}}+\left(\frac{k^2_\phi c_3}{\lambda}+\beta\right)\|\bar{\xi}_{TC}\|\left(an\|e\|+\|e_{PF}\|+|\ddot{\gamma}_d|\right).
\end{align*}
By applying Lemma $4.6$ in \cite{kha2002} and introducing the state transformation
$\bar{\xi}_{TC}=S\xi_{TC}\trieq
\begin{bmatrix}
b\mathbb{I}_{n-1} & Q \\ 
0 & \mathbb{I}_n
\end{bmatrix}\xi_{TC}$, we conclude that

\begin{equation} \label{iss}
\begin{aligned}
    \|\xi_{TC}(t)\|&\leq\kappa_1\|\xi_{TC}(0)\|e^{-\lambda_{TC}t} \\
    &\mathrel{\phantom{\leq}}+\kappa_2\sup_{t\geq0}\left(an\sqrt{n}h(t)+\|e_{PF}(t)\|+|\ddot{\gamma}_d(t)|\right), 
\end{aligned}
\end{equation}
where $\|e(t)\|\leq \sqrt{n}h(t)$ was used and
\begin{align}
    \kappa_1&\trieq\|S^{-1}\|\sqrt{\frac{\max\{c_2,\beta/2\}}{\min\{c1,\beta/2\}}}\|S\|, \label{kappa1} \\
    \kappa_2&\trieq\|S^{-1}\|\sqrt{\frac{\max\{c_2,\beta/2\}}{\min\{c1,\beta/2\}}}\frac{\frac{k^2_\phi c_3}{\lambda}+\beta}{\lambda_{TC}\min\{c_1,\beta/2\}}. \label{kappa2} 
\end{align}
Lastly, we show that $t^i_{k+1}-t^i_k$ is  bounded below. From \eqref{estimator} and \eqref{dyn1'}, it follows that the estimation error dynamics can be written in the form of
\begin{equation}
\begin{aligned}
    \dot{\epsilon}_i(t)&=A\epsilon_i(t)+Bu_i(t), \ \ \ t\in[t^i_k,t^i_{k+1}), \label{dyn:error}\\
    \epsilon_i(t^i_k)&=[0 \ 0]^\top,
\end{aligned}
\end{equation}
where $\epsilon_i(t)=[e_i(t) \ \dot{e}_i(t)]^\top$, $A=\begin{bmatrix}
0 & 1 \\
0 & -b \\
\end{bmatrix}$, $B=\begin{bmatrix}
0 \\
1 \\
\end{bmatrix}$, and $u_i=a\sum_{j\in\mathcal{N}_i}(\gamma_i-\gamma_j)-a\sum_{j\in\mathcal{N}_i}e_j-\bar{\alpha}_i(e_{PF,i})$.

\noindent One can show that \\
\begin{adjustbox}{max width=\linewidth}
\parbox{\linewidth}{\begin{align*}
    |u_i|&\leq a\|L\gamma\|+a\|\mathcal{A}e\|+\|\bar{\alpha}(e_{PF})\| \\
    &\leq a\|L\|\|Q^\top\|\|Q\gamma\|+a\|\mathcal{A}\|\|e\|+\|e_{PF}\| \\
    &\leq an\left(\kappa_1\|\xi_{TC}(0)\|+\kappa_2\sup_{t\geq0}\left(an\sqrt{n}h(t)+\|e_{PF}\|+|\ddot{\gamma}_d|\right)\right) \\
    &\mathrel{\phantom{\leq}}+a\|\mathcal{A}\|\|e\|+\|e_{PF}\| \\
    &\leq an\kappa_1\|\xi_{TC}(0)\|+an\kappa_2\left(an\sqrt{n}\left(c_1+c_2\right)+\rho+\ddot{\gamma}_{d,max}\right) \\
    &\mathrel{\phantom{\leq}}+an\sqrt{n}(c_1+c_2)+\rho\trieq\bar{u}.
\end{align*}}
\end{adjustbox}
Therefore, we can show that the error $e_i(t)$ is bounded above using \eqref{dyn:error} as follows:
\begin{equation}
\begin{aligned}
    |e_i|\leq\|\epsilon_i\|&\leq\int^t_{t^i_k}(\|A\|\|\epsilon_i\|+\|B\||u_i|)d\tau \\
    &\leq\int^t_{t^i_k}\|A\|\|\epsilon_i\|d\tau+(t-t^i_k)\|B\|\bar{u}.
    \label{GB}
\end{aligned}
\end{equation}
Applying Gronwall-Bellman inequality, see Lemma A.1 in \cite{kha2002}, to  inequality \eqref{GB} leads to \\
\begin{adjustbox}{max width=\linewidth}
\parbox{\linewidth}{\begin{align*}
    |e_i|&\leq\|\epsilon_i\|\leq(t-t^i_k)\|B\|\bar{u}+\|A\|\int^t_{t^i_k}(s-t^i_k)\|B\|\bar{u}e^{\|A\|(t-s)}ds \\
    &\leq\|B\|\bar{u}\left( e^{\|A\|(t-t^i_k)}-1\right)/\|A\|.
\end{align*}}
\end{adjustbox}
Considering that the transmission event is triggered when $|e_i(t)|>h(t)\geq c_1$, the inter-event time interval is bounded below:
\begin{align*}
    t^i_{k+1}-t^i_k\geq\frac{1}{\|A\|}\ln{(1+c_1\|A\|/(\|B\|\bar{u}))}>0.
\end{align*}
This completes the proof of Theorem~\ref{thm}.
\end{proof}

\begin{remark}
    We note that the rate of convergence $\lambda_{TC}$ in \eqref{rate} is determined by $T>0$ and $\delta\in(0,T]$, which represent the QoS of the underlying network and also by the values of the coordination control gains $a$ and $b$.
\end{remark}

\section{SIMULATION RESULTS} \label{V}
This section presents simulation results of a coordinated path-following mission, which illustrate the efficacy of the proposed algorithm. The trajectory generation algorithm \cite{Cichella2021} designs a set of Bezier curves with the following specification. The starting points are on $y=0\,m$. The trajectories simultaneously reach $y=150\,m$ exchanging $(x,z)$ coordinates. The inter-vehicle safety distance is $10\,m$. The mission duration is $t_f=21.10\,s$. The solid curves in Figure~\ref{fig:traj} depict the desired trajectories.
\begin{figure} [h!]
    \centering
    \includegraphics[width = 1.00\linewidth]{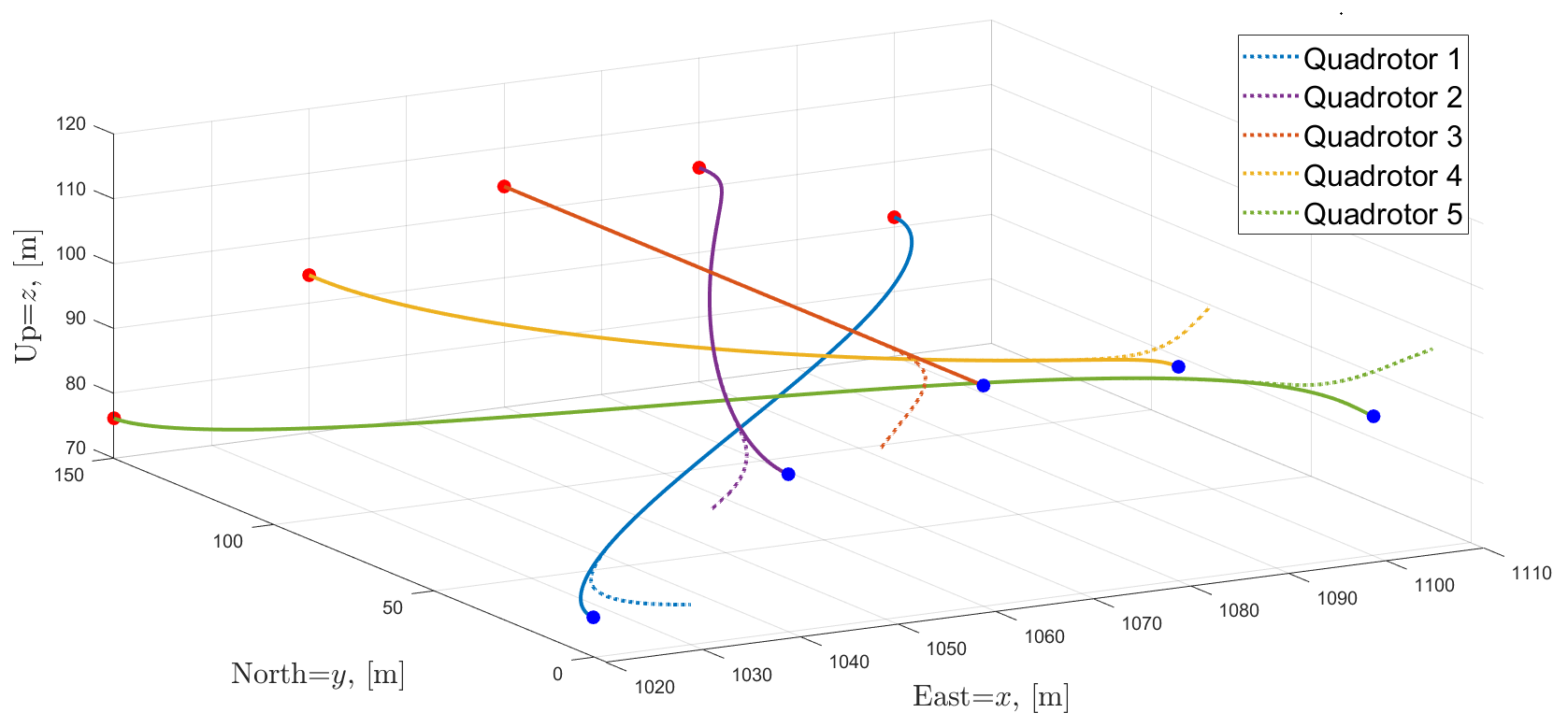}
    \caption{Time-coordinated path-following of five quadrotors. The starting points of the desired trajectories, blue dots, are on $y=0\,m$. The final points, red dots, are on $y=150\,m$.}
    \label{fig:traj}
\end{figure}

Considering a more general situation, the initial positions of the quadrotors are not the same as the initial points of the desired trajectories, i.e., they have initial path-following errors. The path-following controller \cite{CICHELLA201313} steers each quadrotor to the desired trajectory. The dotted curves in Figure~\ref{fig:traj} are the paths travelled by the quadrotors.

Suppose that the quadrotors activate transmission edges in the order of $\mathcal{D}_1$ $\rightarrow$ $\mathcal{D}_2$ $\rightarrow$  $\mathcal{D}_3$ $\rightarrow$  $\mathcal{D}_1$ $\rightarrow$ $\cdots$ in Figure~\ref{fig:d}, with the duration of each $0.03\,s$ due to tight communication bandwidth. Even though $\mathcal{D}(t)$ is not connected at all times, the integrated graph represented by $\int_{t}^{t+0.09}L(\tau)d\tau$ $(\forall t\geq0)$ contains $0.03$-spanning tree, that is, connected in an integral sense satisfying the Assumption~\ref{assum3}. The coordination control gains and the parameter $\eta$ in \eqref{alpha} are set to $a=3.75$, $b=4.82$, and $\eta=12$. The initial conditions for the coordination state are $\gamma(0)=0_n$ and $\dot{\gamma}(0)=1_n$.
\begin{figure} [h!]
     \centering
     \begin{subfigure}[h]{0.1\textwidth}
         \centering
         \includegraphics[width=\textwidth]{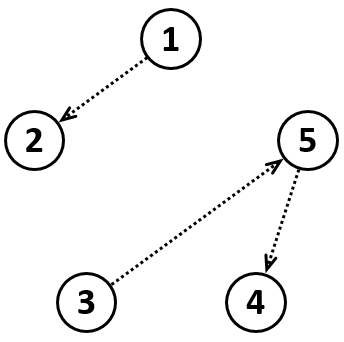}
         \caption{$\mathcal{D}_1$}
         \label{fig:d1}
     \end{subfigure}
     \hspace{2.5em}
     \begin{subfigure}[h]{0.1\textwidth}
         \centering
         \includegraphics[width=\textwidth]{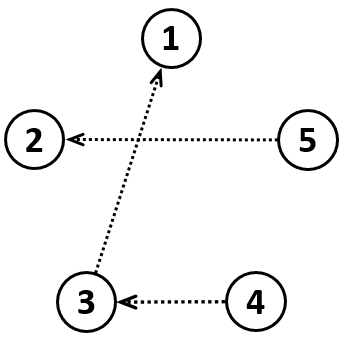}
         \caption{$\mathcal{D}_2$}
         \label{fig:d2}
     \end{subfigure}
     \hspace{2.5em}
     \begin{subfigure}[h]{0.1\textwidth}
         \centering
         \includegraphics[width=\textwidth]{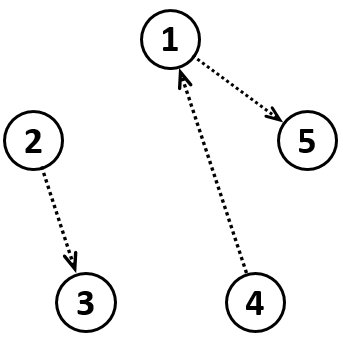}
         \caption{$\mathcal{D}_3$}
         \label{fig:d3}
     \end{subfigure}
        \caption{Switching topology of available transmission edges.}
        \label{fig:d}
\end{figure} 

Figures~\ref{fig:delta_gamma}-\ref{fig:event} illustrate how the coordination controller~\eqref{dyn1} works in combination with \eqref{estimator}. Initially, quadrotors $3$ and $5$ lie ahead of the plane $y=0\,m$, quadrotor $4$ lies behind it, and the remaining ones are on it. By the definition of~\eqref{alpha}, the term $\bar{\alpha}_i(e_{PF,i}(0))$ is positive for $i=3,5$; negative for $i=4$; and close to zero for the remaining $i$'s. As $\bar{\alpha}_i(e_{PF,i}(t))$ is added to the right hand side of \eqref{dyn1}, it makes sense that $\gamma_3(t)$, $\gamma_5(t)$ accelerate and $\gamma_4(t)$ decelerates for the first few seconds in Figure~\ref{fig:gamma_dot}. This evolution of $\gamma_i(t)$ helps each quadrotor to quickly approach the desired position remaining inside the region of attraction, however at the cost of increased inter-vehicle discoordination. Here, the discoordination is alleviated with the proposed ETC algorithm. Comparing \eqref{dyn1} with \eqref{estimator}, we can notice that the estimation error $e_i(t)=\hat{\gamma}_i(t)-\gamma_i(t)$ stems from $\bar{\alpha}_i(e_{PF,i}(t))$. With larger values of $\bar{\alpha}_i(e_{PF,i}(0))$ for $i=3,4,5$, the corresponding error $e_i(t)$, $i=3,4,5$ evolves fast and whenever it reaches the the threshold function $h(t)$, sampling and transmission takes place. In our simulation, we used $h(t)=0.03$. In Figure~\ref{fig:event}, it is confirmed that quadrotors $3$, $4$, and $5$ more frequently take a sample and transmit it than the others do. The transmitted data is used by the estimator \eqref{estimator} to compute the estimate, which is then used in the second term of \eqref{dyn1} to achieve the inter-vehicle coordination. The fleet simultaneously arrives on $y=150\,m$ at $t=21.06\,s$, slightly earlier than the original schedule $t_f=21.10\,s$. It is because the quadrotors $3$ and $5$, initially lying ahead of $y=0\,m$, made the mission unfold fast. In this case, it is possible to have the fleet arrive on $y=150\,m$ as planned by adjusting $\dot{\gamma}_d(t)$ from $1$ to a smaller value in the middle of the mission. 

Notice that once the coordination errors become sufficiently small, the event is triggered less frequently, Figure~\ref{fig:event}. Therefore, we can say that the inter-vehicle communication occurs intelligently only when discoordination between the quadrotors increases. Lastly, in the case when the quadrotors are discoordinated by disturbances such as wind gusts, they can restore the coordination in the same manner.

\begin{figure} [h!]
    \centering
    \includegraphics[width = 1.00\linewidth]{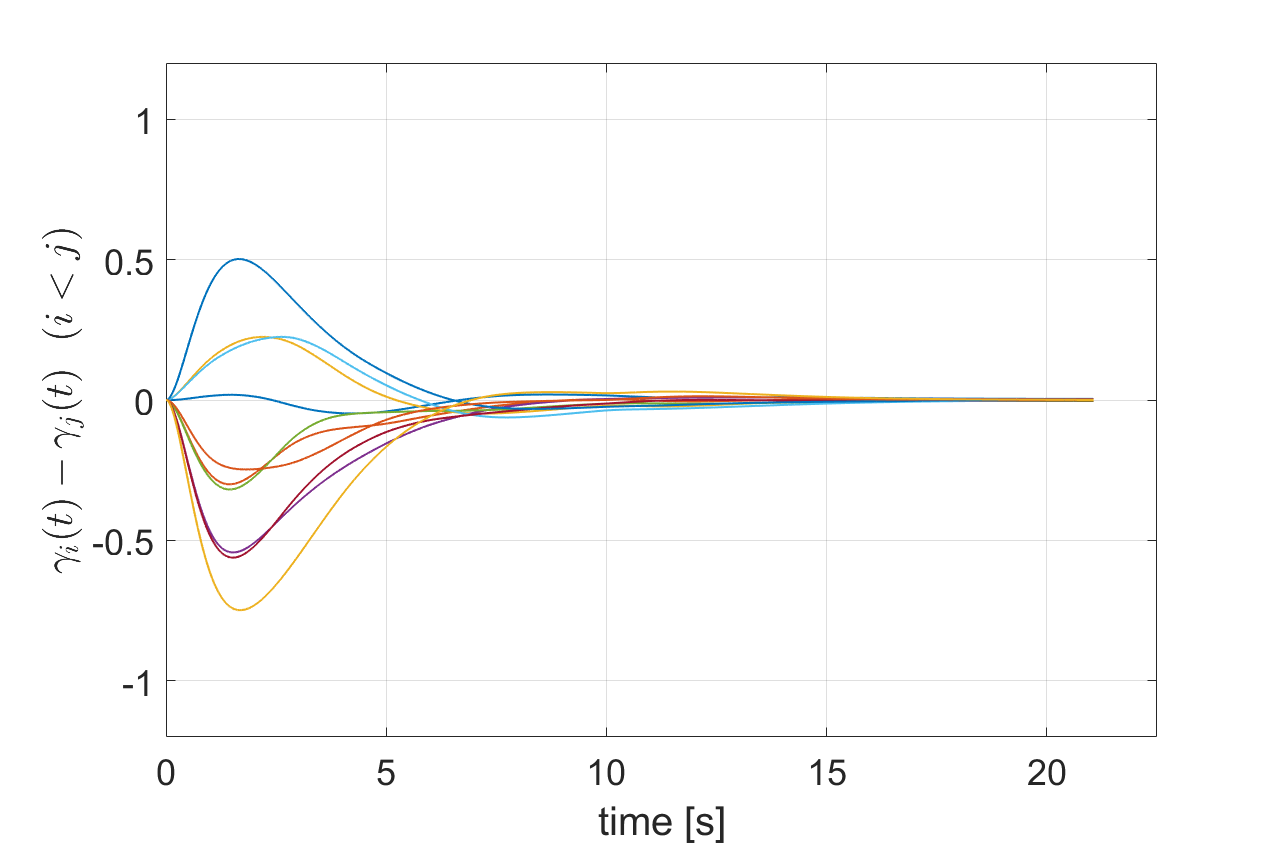}
    \caption{Convergence of the coordination error $\gamma_i(t)-\gamma_j(t)$ $(i<j)$ to a neighborhood of zero.}
    \label{fig:delta_gamma}
\end{figure}

\begin{figure} [h!]
    \centering
    \includegraphics[width = 1.00\linewidth]{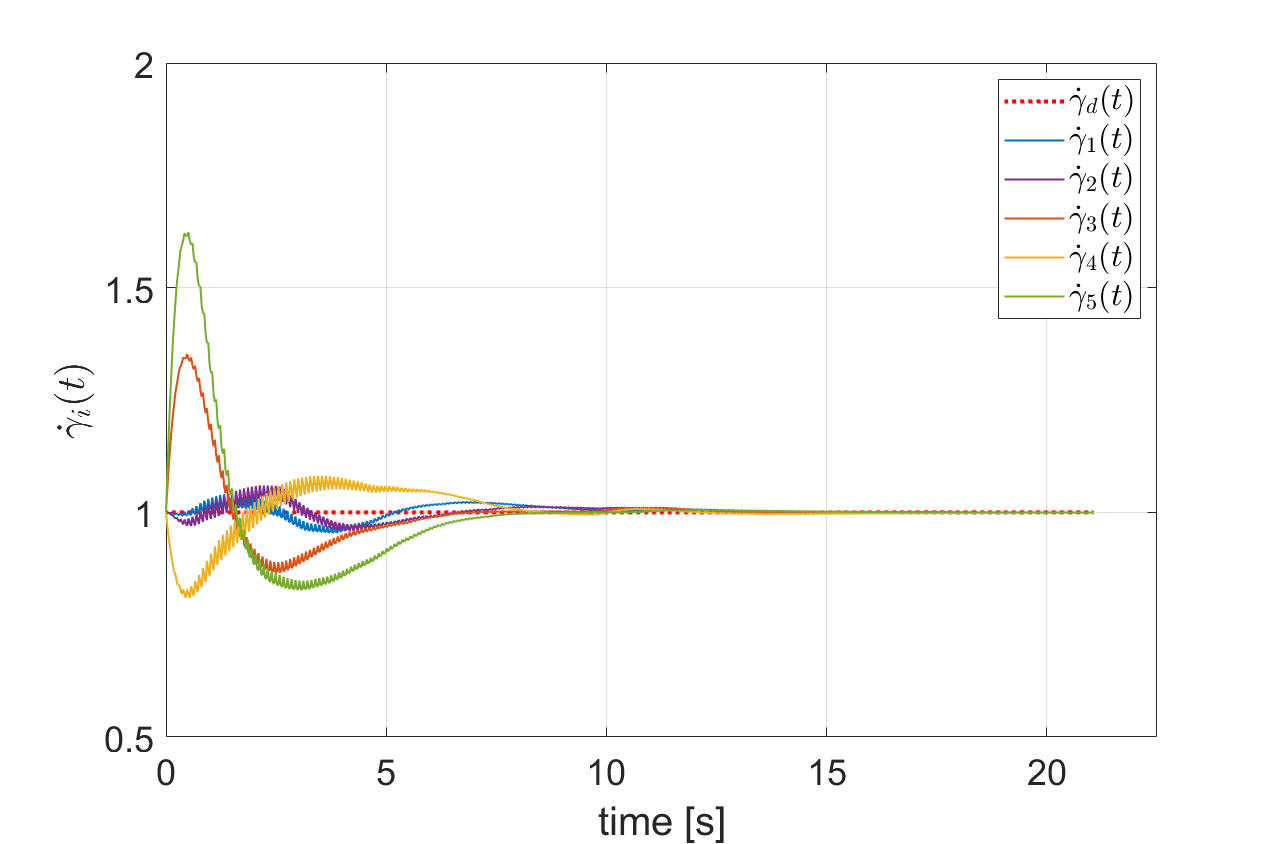}
    \caption{Convergence of $\dot{\gamma}_i(t)$ to a neighborhood of $\dot{\gamma}_d(t)=1$.}
    \label{fig:gamma_dot}
\end{figure}

\begin{figure} [h!]
    \centering
    \includegraphics[width = 1.00\linewidth]{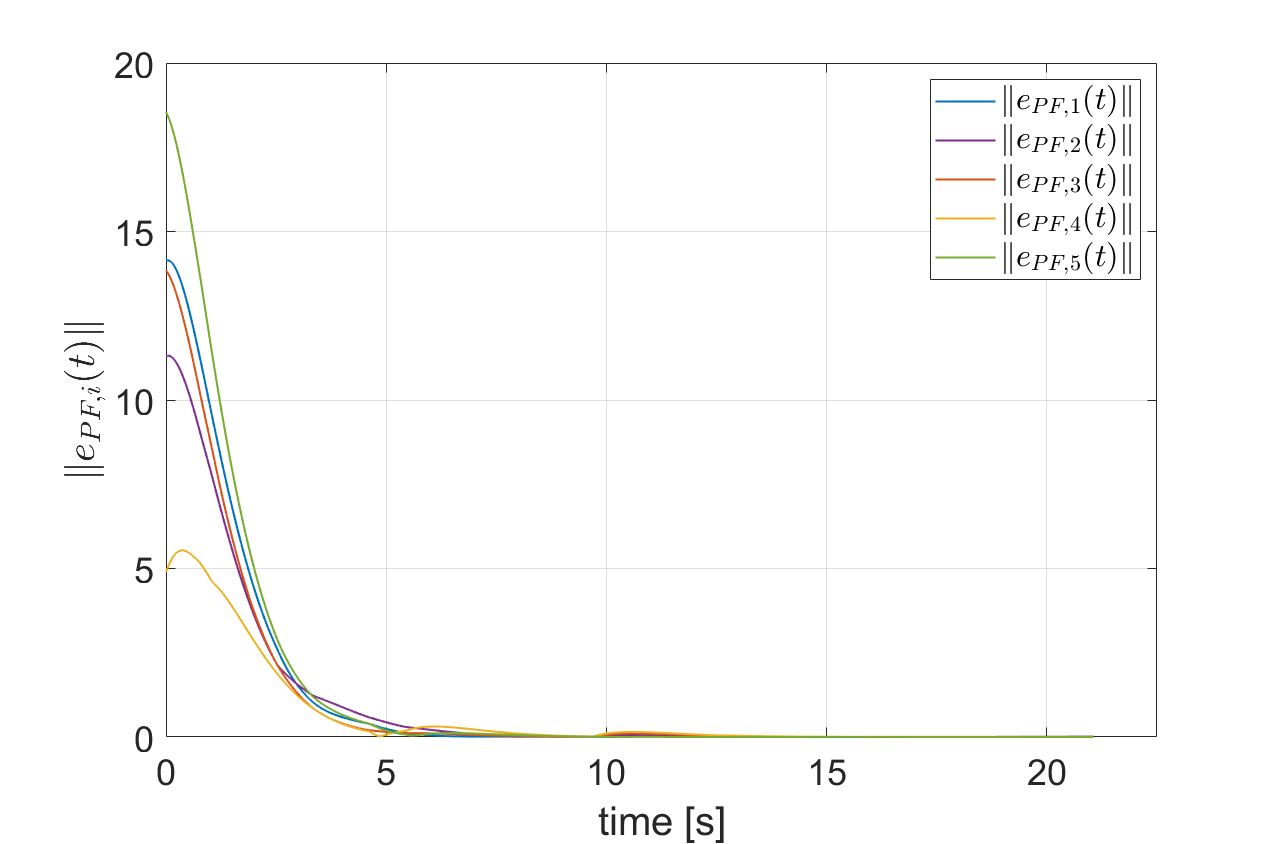}
    \caption{Path-following errors.}
    \label{fig:e_pf}
\end{figure}

\begin{figure} [h!]
    \centering
    \includegraphics[width = 1.00\linewidth]{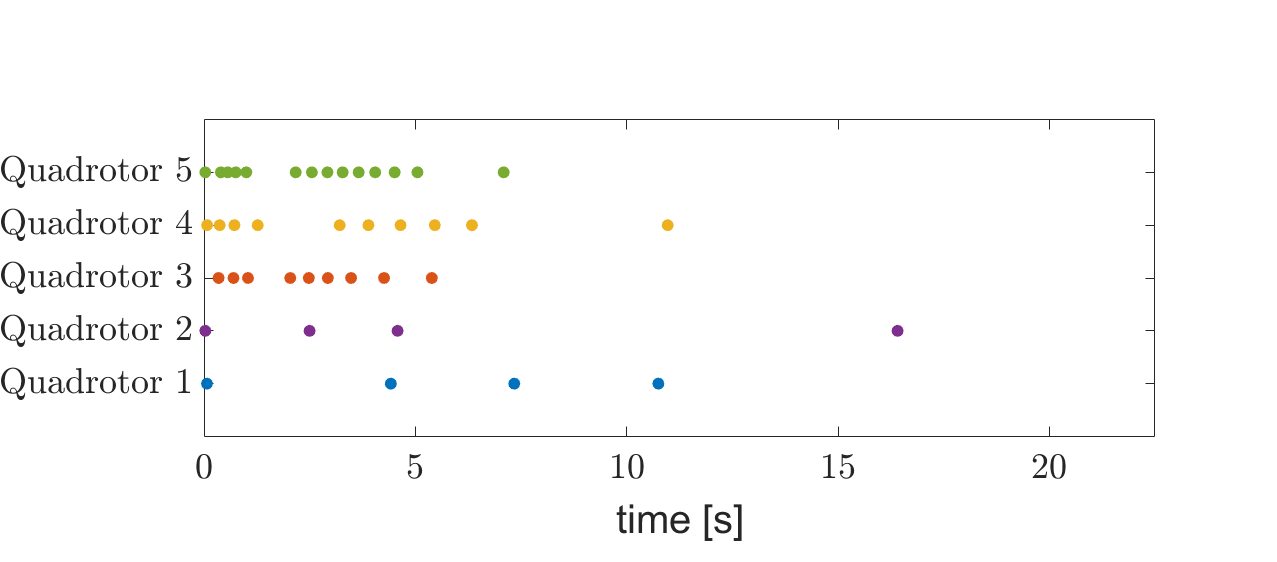}
    \caption{Event-triggered time instances.}
    \label{fig:event}
\end{figure}
\section{CONCLUSION} \label{VI}
This paper proposed a novel time-coordination algorithm using an event-triggered communication strategy. The exponential convergence of the coordination errors to a neighborhood of zero was proven using an ISS framework. Simulation results validated that coordinating the path following of quadrotors is possible with intermittent event-triggered communications.



\bibliographystyle{ieeetr}
\bibliography{references}

\begin{thebibliography}{10}

\bibitem{Lee2013}
T.~Lee, K.~Sreenath, and V.~Kumar, ``Geometric control of cooperating multiple
  quadrotor uavs with a suspended payload,'' in {\em 52nd IEEE Conference on
  Decision and Control}, pp.~5510--5515, 2013.

\bibitem{Lee2017}
H.~Lee and H.~J. Kim, ``Constraint-based cooperative control of multiple aerial
  manipulators for handling an unknown payload,'' {\em IEEE Transactions on
  Industrial Informatics}, vol.~13, no.~6, pp.~2780--2790, 2017.

\bibitem{Morgan20141725}
D.~Morgan, S.-J. Chung, and F.~Y. Hadaegh, ``Model predictive control of swarms
  of spacecraft using sequential convex programming,'' {\em Journal of
  Guidance, Control, and Dynamics}, vol.~37, no.~6, p.~1725–1740, 2014.

\bibitem{Bandyopadhyay2015}
S.~Bandyopadhyay, G.~P. Subramanian, R.~Foust, D.~Morgan, S.-J. Chung, and
  F.~Y. Hadaegh, ``A review of impending small satellite formation flying
  missions,'' {\em 53rd AIAA Aerospace Sciences Meeting}, 2015.

\bibitem{Schmuck2017}
P.~Schmuck and M.~Chli, ``Multi-uav collaborative monocular slam,'' in {\em
  2017 IEEE International Conference on Robotics and Automation (ICRA)},
  pp.~3863--3870, 2017.

\bibitem{Dubé2017}
R.~Dubé, A.~Gawel, H.~Sommer, J.~Nieto, R.~Siegwart, and C.~Cadena, ``An
  online multi-robot slam system for 3d lidars,'' in {\em 2017 IEEE/RSJ
  International Conference on Intelligent Robots and Systems (IROS)},
  pp.~1004--1011, 2017.

\bibitem{Choe20161744}
R.~Choe, J.~Puig-Navarro, V.~Cichella, E.~Xargay, and N.~Hovakimyan,
  ``Cooperative trajectory generation using pythagorean hodograph bézier
  curves,'' {\em Journal of Guidance, Control, and Dynamics}, vol.~39, no.~8,
  p.~1744 – 1763, 2016.

\bibitem{Cichella2021}
V.~Cichella, I.~Kaminer, C.~Walton, N.~Hovakimyan, and A.~M. Pascoal, ``Optimal
  multivehicle motion planning using bernstein approximants,'' {\em IEEE
  Transactions on Automatic Control}, vol.~66, no.~4, pp.~1453--1467, 2021.

\bibitem{CICHELLA201313}
V.~Cichella, R.~Choe, S.~B. Mehdi, E.~Xargay, N.~Hovakimyan, I.~Kaminer, and
  V.~Dobrokhodov, ``A 3d path-following approach for a multirotor uav on
  so(3),'' {\em IFAC Proceedings Volumes}, vol.~46, no.~30, pp.~13--18, 2013.
\newblock 2nd IFAC Workshop on Research, Education and Development of Unmanned
  Aerial Systems.

\bibitem{Kaminer2006}
I.~Kaminer, O.~Yakimenko, A.~Pascoal, and R.~Ghabcheloo, ``Path generation,
  path following and coordinated control for timecritical missions of multiple
  uavs,'' in {\em 2006 American Control Conference}, pp.~4906--4913, 2006.

\bibitem{Ghabcheloo2007133}
R.~Ghabcheloo, A.~Pascoal, C.~Silvestre, and I.~Kaminer, ``Non-linear
  co-ordinated path following control of multiple wheeled robots with
  bidirectional communication constraints,'' {\em International Journal of
  Adaptive Control and Signal Processing}, vol.~21, no.~2-3, p.~133–157,
  2007.

\bibitem{Xargay2013499}
E.~Xargay, I.~Kaminer, A.~Pascoal, N.~Hovakimyan, V.~Dobrokhodov, V.~Cichella,
  A.~Aguiar, and R.~Ghabcheloo, ``Time-critical cooperative path following of
  multiple unmanned aerial vehicles over time-varying networks,'' {\em Journal
  of Guidance, Control, and Dynamics}, vol.~36, no.~2, p.~499–516, 2013.

\bibitem{Cichella2015945}
V.~Cichella, I.~Kaminer, V.~Dobrokhodov, E.~Xargay, R.~Choe, N.~Hovakimyan,
  A.~P. Aguiar, and A.~M. Pascoal, ``Cooperative path following of multiple
  multirotors over time-varying networks,'' {\em IEEE Transactions on
  Automation Science and Engineering}, vol.~12, no.~3, p.~945–957, 2015.

\bibitem{Mehdi2017}
S.~B. Mehdi, V.~Cichella, T.~Marinho, and N.~Hovakimyan, ``Collision avoidance
  in multi-vehicle cooperative missions using speed adjustment,'' in {\em 2017
  IEEE 56th Annual Conference on Decision and Control (CDC)}, pp.~2152--2157,
  2017.

\bibitem{Tabasso2020436}
C.~Tabasso, V.~Cichella, S.~Bilal~Mehdi, T.~Marinho, and N.~Hovakimyan,
  ``Guaranteed collision avoidance in multivehicle cooperative missions using
  speed adjustment,'' {\em Journal of Aerospace Information Systems}, vol.~17,
  no.~8, p.~436–453, 2020.

\bibitem{Tabasso2022704}
C.~Tabasso, C.~Kielas-Jensen, V.~Cichella, S.~Manyam, D.~W. Casbeer, and
  I.~Weintraub, ``Continuous monitoring of a path-constrained moving target by
  multiple unmanned aerial vehicles,'' {\em Journal of Guidance, Control, and
  Dynamics}, vol.~45, no.~4, p.~704–713, 2022.

\bibitem{Hyungsoo2023}
H.~Kang, I.~Kaminer, V.~Cichella, and N.~Hovakimyan, ``Coordinated path
  following of quadrotors over time-varying digraphs connected in an integral
  sense,'' {\em arXiv preprint}, 2023.

\bibitem{Dimarogonas2012}
D.~V. Dimarogonas, E.~Frazzoli, and K.~H. Johansson, ``Distributed
  event-triggered control for multi-agent systems,'' {\em IEEE Transactions on
  Automatic Control}, vol.~57, no.~5, pp.~1291--1297, 2012.

\bibitem{SEYBOTH2013245}
G.~S. Seyboth, D.~V. Dimarogonas, and K.~H. Johansson, ``Event-based
  broadcasting for multi-agent average consensus,'' {\em Automatica}, vol.~49,
  no.~1, pp.~245--252, 2013.

\bibitem{Fan2015}
Y.~Fan, L.~Liu, G.~Feng, and Y.~Wang, ``Self-triggered consensus for
  multi-agent systems with zeno-free triggers,'' {\em IEEE Transactions on
  Automatic Control}, vol.~60, no.~10, pp.~2779--2784, 2015.

\bibitem{YANG2019129}
R.~Yang, H.~Zhang, G.~Feng, H.~Yan, and Z.~Wang, ``Robust cooperative output
  regulation of multi-agent systems via adaptive event-triggered control,''
  {\em Automatica}, vol.~102, pp.~129--136, 2019.

\bibitem{Hu2018}
W.~Hu, L.~Liu, and G.~Feng, ``Cooperative output regulation of linear
  multi-agent systems by intermittent communication: A unified framework of
  time- and event-triggering strategies,'' {\em IEEE Transactions on Automatic
  Control}, vol.~63, no.~2, pp.~548--555, 2018.

\bibitem{Qian2019}
Y.-Y. Qian, L.~Liu, and G.~Feng, ``Output consensus of heterogeneous linear
  multi-agent systems with adaptive event-triggered control,'' {\em IEEE
  Transactions on Automatic Control}, vol.~64, no.~6, pp.~2606--2613, 2019.

\bibitem{Wu2018}
Z.-G. Wu, Y.~Xu, R.~Lu, Y.~Wu, and T.~Huang, ``Event-triggered control for
  consensus of multiagent systems with fixed/switching topologies,'' {\em IEEE
  Transactions on Systems, Man, and Cybernetics: Systems}, vol.~48, no.~10,
  pp.~1736--1746, 2018.

\bibitem{Cheng2019}
B.~Cheng, X.~Wang, and Z.~Li, ``Event-triggered consensus of homogeneous and
  heterogeneous multiagent systems with jointly connected switching
  topologies,'' {\em IEEE Transactions on Cybernetics}, vol.~49, no.~12,
  pp.~4421--4430, 2019.

\bibitem{Hu2019}
W.~Hu, L.~Liu, and G.~Feng, ``Event-triggered cooperative output regulation of
  linear multi-agent systems under jointly connected topologies,'' {\em IEEE
  Transactions on Automatic Control}, vol.~64, no.~3, pp.~1317--1322, 2019.

\bibitem{Jia2018}
Q.~Jia and W.~K.~S. Tang, ``Consensus of multi-agents with event-based
  nonlinear coupling over time-varying digraphs,'' {\em IEEE Transactions on
  Circuits and Systems II: Express Briefs}, vol.~65, no.~12, pp.~1969--1973,
  2018.

\bibitem{HAN2015196}
Y.~Han, W.~Lu, and T.~Chen, ``Consensus analysis of networks with time-varying
  topology and event-triggered diffusions,'' {\em Neural Networks}, vol.~71,
  pp.~196--203, 2015.

\bibitem{Hao2023}
Y.~Hao, L.~Liu, and G.~Feng, ``Event-triggered cooperative output regulation of
  heterogeneous multiagent systems under switching directed topologies,'' {\em
  IEEE Transactions on Cybernetics}, vol.~53, no.~2, pp.~1026--1038, 2023.

\bibitem{phdenric2013}
E.~Xargay, {\em Time-Critical Cooperative Path-Following Control of Multiple
  Unmanned Aerial Vehicles}.
\newblock PhD thesis, University of Illinois at Urbana-Champaign, 2013.

\bibitem{kha2002}
H.~K. Khalil, {\em Nonlinear Systems}.
\newblock Prentice-Hall, Englewood Cliffs, NJ, 2002.

\end{thebibliography}

\end{document}